\newtheorem{theorem}{Theorem}
\newtheorem{corollary}{Corollary}
\def\BibTeX{{\rm B\kern-.05em{\sc i\kern-.025em b}\kern-.08em
T\kern-.1667em\lower.7ex\hbox{E}\kern-.125emX}}
\begin{document}

\title{Send Pilot or Data? Leveraging Age of Channel State Information for Throughput Maximization\\

\thanks{This work was supported in part by the NSF grant CNS-2239677.}
}

\author{\IEEEauthorblockN{Sirin Chakraborty}
\IEEEauthorblockA{\textit{Dept. of ECE} \\
\textit{Auburn University}\\
Auburn, AL, USA \\
szc0260@auburn.edu}
\and
\and
\IEEEauthorblockN{Yin Sun}
\IEEEauthorblockA{\textit{Dept. of ECE} \\
\textit{Auburn University}\\
Auburn, AL, USA \\
yzs0078@auburn.edu}
}

\maketitle

\begin{abstract}
    In this paper, we study the optimal timing for pilot and data transmissions to maximize effective throughput, also known as goodput, over a wireless fading channel. The receiver utilizes the received pilot signal and its Age of Information (AoI), termed the Age of Channel State Information (AoCSI), to estimate the channel state. Based on this estimation, the transmitter selects an appropriate modulation and coding scheme (MCS) to maximize goodput while ensuring compliance with a predefined block error probability constraint. Furthermore, we design an optimal pilot scheduling policy that determines whether to transmit a pilot or data at each time step, with the objective of maximizing the long-term average goodput. This problem involves a non-monotonic AoI metric optimization challenge, as the goodput function is non-monotonic with respect to AoCSI. The numerical results illustrate the performance gains achieved by the proposed policy under various SNR levels and mobility speeds.
\end{abstract}


\section{Introduction}
In wireless communication systems, accurate and timely Channel State Information (CSI) is essential for reliable and efficient data transmission. CSI is obtained through channel estimation, where the transmitter periodically sends pilot signals that allow the receiver to assess the channel state. This information is crucial for optimizing transmission parameters, particularly the selection of the Modulation and Coding Scheme (MCS). However, due to dynamic wireless environments influenced by user mobility and environmental changes, maintaining up-to-date CSI is challenging. To address this, the Age of Channel State Information (AoCSI) \cite{costa2015age}, denoted as $\Delta(t)=t-t_p$, quantifies the freshness of CSI by measuring the time difference between current time $t$ and the generation time of the latest received update $t_p$. As AoCSI increases, the reliability of CSI degrades, potentially leading to suboptimal transmission decisions and increased error rates.

In this paper, we consider a wireless communication system that operates in discrete time slots, with the transmitter deciding at each step whether to send a pilot signal to update the CSI or transmit data using the most recently received CSI. When no pilot is sent, the AoCSI increments, indicating that the available CSI is becoming less reliable. The transmitter then uses this outdated CSI to determine the best MCS for transmitting data. The process begins with channel estimation, where the transmitter relies on the latest available CSI to assess current channel conditions. Based on this estimation, it selects an appropriate MCS, which determines how data is modulated and encoded to ensure reliable transmission while optimizing the data rate. The transmitter aims to maximize overall goodput by selecting the most suitable MCS. It then sends the modulated and encoded data to the receiver, which then demodulates and decodes the received data using the chosen MCS.
In contrast, when a pilot signal is transmitted, the CSI is refreshed, providing the transmitter with accurate and up-to-date channel information.
\begin{figure}[t]
    \centering
\adjincludegraphics[width=1.08\linewidth,trim={0 {.1\height} 0 {.2\height}},clip]{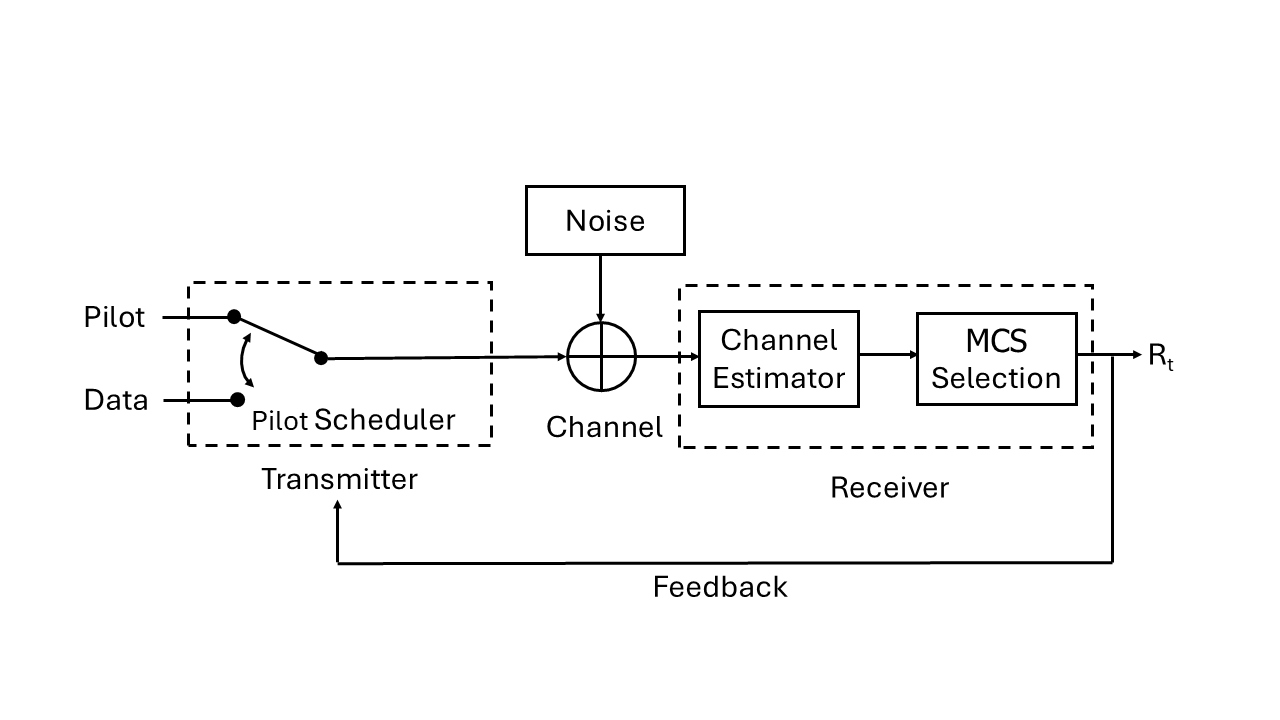}
    \caption{The model of a wireless communication system.}
    \label{diag}
\end{figure}
The key contributions of this work are as follows:
\begin{itemize}
    \item Our framework directly incorporates AoCSI into channel estimation and MCS selection, enabling adaptive modulation and coding strategies based on the freshness of CSI.
    \item We propose a scheduling policy that decides whether to transmit pilot or utilize existing channel information for data transmission. Unlike conventional approaches that assume a monotonically increasing AoI \cite{sun2017update, kadota2018optimizing, bedewy2021optimal, klugel2019aoi, sun2019sampling, kadota2018scheduling}, we recognize that AoCSI can exhibit non-monotonic function for a non-Markovian random channel.
    \item  We formally establish that the optimal pilot scheduling policy is a threshold-based policy which follows a periodic pattern. An index function is used to express the threshold-based policy. In our next step, we will consider the multi-user and multi-antenna case, where the optimal solution may not be periodic.
    \item Numerical results demonstrate the performance gains achieved by the proposed policy under various SNR (Signal to Noise Ratio) levels and mobility speeds, compared to a periodic updating policy with a fixed period. 
\end{itemize}

\section{Related Works}

Age of Information (AoI) has emerged as a fundamental metric for quantifying information freshness in data communication systems since its introduction in \cite{kaul_2012_real}. Early research primarily focused on optimizing average and peak AoI in communication networks \cite{kaul_2012_real, yates2015lazy, sun2017update, kadota2018optimizing}, while more recent studies have explored its applications in real-time systems such as remote estimation \cite{ornee2021sampling, sun2019sampling, chakraborty2025timely}, remote inference \cite{shisher2024timely, shisher2022does, ari2024goal,shisher2023learning}, and control systems \cite{soleymani2019stochastic, klugel2019aoi}. Traditional works \cite{sun2017update, kadota2018optimizing, bedewy2021optimal, klugel2019aoi, sun2019sampling, kadota2018scheduling} largely assumed a monotonic AoI function, where performance degrades predictably as AoI increases. However, recent studies \cite{shisher2021age,shisher2022does,shisher2024timely,shisher2023learning,ari2024goal,chakraborty2025timely} challenged this assumption, demonstrating that information aging can exhibit non-monotonic behavior, for non-Markovian signal. To that end, an information-theoretic interpretation of information freshness was developed in \cite{shisher2021age, shisher2022does,shisher2024timely}. 

AoI-based scheduling policies have also evolved significantly. For Markovian data with monotonic AoI, ``generate-at-will'' policies have been widely studied in \cite{yates2015lazy, sun2017update, sun2019sampling, ornee2021sampling, bedewy2021optimal}, which prioritize transmission of the freshest data. More recent work \cite{shisher2024timely} introduced a ``selection-from-buffer'' policy, which optimally selects updates from a stored buffer to address non-Markovian data processes. While these approaches assume the feasibility of buffering in the context of data transmission, our work differs in that as it focuses on pilot transmissions, where buffering is not feasible. 

AoCSI extends the concept of AoI to wireless channels, quantifying the staleness of CSI and its impact on system performance. The paper, \cite{costa2015age} were among the first to formally define AoCSI, analyzing its effect on wireless links and proposing strategies to mitigate channel aging. Subsequent works \cite{costa2015csit, klein2017staleness} have considered stale CSI at the transmitter, examining its effects on feedback delay and precoding strategies. Later researches \cite{costa2015csit, klein2017staleness, truong2013effects, lipski2024age} focused on finite-state Markov models and assumed a monotonic AoCSI function, where outdated CSI consistently degrades performance in scheduling, power control, and beamforming \cite{lipski2024age}. 
In contrast, our evaluations utilize Jakes’ fast fading model \cite{baddour2005autoregressive} to analyze the impact of AoCSI on channel estimation and MCS selection. As the temporal variation of channel coefficients in Jakes’ model follows a non-Markovian pattern, our numerical evaluation shows that the maximum achievable goodput is a non-monotonic function of AoCSI, which provides new insights into AoCSI-aware communication.

\section{Model and Formulation}

\subsection{System Model}\label{sys_model}

We consider a block fading wireless communication system, as depicted in Figure \ref{diag}. At each time slot 
$t\in\{1,2,\ldots\}$, the transmitter decides to either send a pilot signal or transmit data. We define an indicator function $c(t)\in \{0,1\}$, where $c(t)=1$ indicates that the transmitter is transmitting a pilot at time $t$, and $c(t)=0$ means it is transmitting data. If a pilot, known to both the transmitter and the receiver, is sent at time $t$ with a transmission power of $P_p$, the receiver receives a noisy version of the pilot, given by 
\begin{equation}
y_{\text{pilot},t}=\sqrt{P_p}h_t+n_t,
\end{equation}
where $h_t$ is the channel state at time $t$, assumed to be a circular symmetric complex Gaussian random variable with zero mean,  and $n_t \sim \mathcal{C}\mathcal{N}(0,\sigma_n^2)$ is a circular symmetric complex Gaussian noise with zero mean and $\sigma_n^2$ variance. If a data packet $x_t$ is transmitted at time $t$ with transmission power $P_d$, the received signal is expressed as
\begin{equation}
y_{\text{data},t}=\sqrt{P_d}h_t x_t+n_t,
\end{equation}
where $x_t$ has zero mean and unit variance.

Pilot transmissions consume resources (time, power, and bandwidth) that could otherwise be used for data transmission. In reality, pilot signals are not transmitted at every time slot. When data is transmitted, the receiver has to select the modulation and coding scheme based on the latest received pilot. At time $t$, suppose that the latest pilot signal was received $\Delta(t)$ time slots ago, denoted as $y_{\text{pilot}, t-\Delta(t)}$. Here, $\Delta(t)$ is the \emph{Age of Channel State Information (AoCSI)} \cite{costa2015age}, which represents the time difference between the latest CSI update and the current time $t$. The AoCSI evolves as follows
\begin{equation}\label{aocsi}
    \Delta(t+1) = 
 \begin{cases}
 1, & \text{if} \quad c(t) = 1 \quad (\text{pilot transmitted}), \\
 \Delta(t) + 1, & \text{if} \quad c(t) = 0 \quad (\text{data transmitted}).
 \end{cases}
\end{equation}

The receiver estimates the current channel state $h_t$ using the most recently received pilot $y_{\text{pilot}, t- \Delta(t)}$. Because $h_t$ and $y_{\text{pilot}, t- \Delta(t)}$ are jointly complex Gaussian with zero mean, a linear MMSE channel estimator is employed. The reconstructed channel estimate $\hat h_t$ is given by 
\begin{equation}
    \hat{h}_t=\frac{\sqrt{P_p}\rho_h(\Delta(t))}{P_p\rho_h(0)+\sigma_n^2}y_{\text{pilot}, t- \Delta(t)},
\end{equation}
where $\rho_h(\Delta(t))=\mathbb E [h_t h^*_{t-\Delta(t)}]$ is the auto-covariance function of the channel process $h_t$, assumed to be known at the receiver, and $z^*$ denotes the complex conjugate of $z$. 
The channel state $h_t$ can be expressed as the sum of the channel estimate $\hat{h}_t$ and the estimation error $\bar{h}_t$, i.e.,
\begin{equation}
h_t=\hat{h}_t+\bar{h}_t.
\end{equation}
Given AoCSI $\Delta(t)=\delta$ and the latest received pilot $y_{\text{pilot},t-\Delta(t)}=y$, the variance of channel estimate is 
\begin{align}\label{cov_est1}\nonumber
\text{var}[\hat{h}_t|y_{\text{pilot},t-\delta}=y]\nonumber
&=\mathbb{E}[|\hat{h}_t|^2|y_{\text{pilot},t-\delta}=y],\\
&=\bigg(\frac{\sqrt{P_p}|\rho_h(\delta)|}{P_p\rho_h(0)+\sigma_n^2}\bigg)^2|y|^2.
\end{align}
Due to the orthogonality property of MMSE estimation, $\hat{h}_t$ and $\bar{h}_t$ are independent of each other. 
The signal to interference and noise ratio (SINR) can be obtained as 
\begin{align}\nonumber\label{sinr}
    \eta (\delta,y)&=\frac{\text{var}\left[\sqrt P_d \hat h_t x_t|y_{\text{pilot},t-\delta}=y\right]}{\text{var}\left[\sqrt P_d  \bar{h}_t x_t + n_t|y_{\text{pilot},t-\delta}=y\right]},\\
    &=\frac{P_d\text{var}\Big[ \hat h_t|y\Big]}{P_d\text{var}\Big[\bar{h_t}|y\Big]+\text{var}\Big[ n_t|y \Big]},\\\nonumber
    &=\frac{P_d\bigg(\frac{\sqrt{P_p}|\rho_h(\delta)|}{P_p\rho_h(0)+\sigma_n^2}\bigg)^2|y|^2}{P_d \bigg(\rho_h(0)-\frac{P_p|\rho_h(\delta)|^2}{P_p\rho_h(0)+\sigma_n^2}\bigg)+\sigma_n^2}.
\end{align}

\subsection{MCS Selection}
If data is scheduled for transmission in time slot $t$, i.e., $c(t)=0$, the transmitter selects an optimal modulation and coding scheme (MCS) $R_t$ based on the SINR 
\begin{equation}
    \eta_t = \eta(\Delta(t), y_{\text{pilot},t-\Delta(t)}),
\end{equation}
where $\eta(\cdot,\cdot)$ is defined in \eqref{sinr}.
The transmitter sends a modulated and encoded signal over a block fading channel, while the receiver demodulates the received signal and then performs channel decoding. The decoding error probability, referred to as the Block Error Rate (BLER) and denoted as \( e(\eta_t, R_t) \), depends on the SINR \( \eta_t \) and the selected MCS, where \( R_t \) represents the data rate of the chosen MCS. The objective is to maximize the goodput $R_t[1-e(\eta_t, R_t)]$, while ensuring that the BLER satisfies the constraint $e(\eta_t, R_t) \leq e_\text{max}$. For instance, in LTE data communications, the maximum target BLER is typically set to $e_\text{max}=10\%$ before HARQ retransmissions \cite{piro2010simulating}. Hence, the maximum goodput at time slot $t$ is determined by
\begin{align}
    G_\text{max}(\eta_t) = &\max_{R_t} R_t[1-e(\eta_t, R_t)],\\
&~\text{s.t.}~~ e(\eta_t, R_t) \leq e_\text{max}.
\end{align} 
Given the AoCSI $\Delta(t)=\delta$, the average maximum goodput of time slot $t$ is 
\begin{align}
    r(\delta)=&\mathbb E[G_\text{max}(\eta_t)|\Delta(t)=\delta],\\
    =&\mathbb E\left[G_\text{max}\left(
    \eta\left(\Delta(t),{y_{\text{pilot},t-\Delta(t)}}\right)\right)|\Delta(t)=\delta\right].
\end{align}
 

\subsection{Pilot Scheduling Policy and Problem Formulation}

A pilot scheduling policy, denoted by $\pi =(c(1), c(2), \ldots)$, determines at each time slot whether to transmit a pilot to refresh the CSI or to transmit data based on the latest pilot. We consider AoCSI-based pilot scheduling, where the pilot scheduling policy $\pi$ is determined based on the AoCSI process. Let $\Pi$ be the set of all possible causal scheduling policies, in which every decision $c(t)\in\{0,1\}$ is made based on the current and history AoCSI $\{\Delta(s), s=1,2,\ldots, t\}$.

The objective of pilot scheduling is to develop an optimal scheduling policy that maximizes the long-term average goodput over an infinite time horizon, which is formulated as 
\begin{equation} \label{prob}
    \bar{r}_{opt}=
    \max_{\pi\in \Pi}\liminf_{T\rightarrow \infty} \frac{1}{T}\sum_{t=1}^T \mathbb{E}_{\pi}[(1-c(t))r(\Delta(t))],
\end{equation}
where $\bar{r}_{opt}$ is the optimal objective value of \eqref{prob}. 

\section{Optimal Pilot Scheduling Policy}

Problem \eqref{prob} differs from existing studies on AoI metric optimization in the following aspects: (i) When data is transmitted, i.e., $c(t)=0$, the goodput $[1-c(t)]r(\Delta(t))$ is a non-monotonic function of the AoCSI $\Delta(t)$, which is different from the monotonic AoI metrics studied in, e.g., \cite{sun2017update, kadota2018optimizing, bedewy2021optimal, klugel2019aoi, sun2019sampling, kadota2018scheduling}. (ii) When a pilot transmission occurs, i.e., $c(t)=1$, the goodput $[1-c(t)]r(\Delta(t))$ is zero, which is different from other recent studies \cite{shisher2022does,shisher2024timely,shisher2023learning,ari2024goal,chakraborty2025timely} on non-monotonic AoI metric optimization. To address these challenges, we establish the following Theorem \ref{thm1} to solve \eqref{prob}. 

Define the following index function: 
\begin{align}\label{index}
    \gamma(\delta)=\max_{\tau\in\{1,2,\ldots\}}\frac{1}{\tau}\sum_{k=0}^{\tau-1}\bigg[r(\delta+k)\bigg].
\end{align}
\begin{theorem} \label{thm1}\textit{There exists an optimal solution $\pi^*=(c^*(1),c^*(2),\ldots)$ to problem \eqref{prob}, where}
\begin{align}\label{decision}
    c^*(t)=\begin{cases}
        1, &\textit{if} 
 ~\gamma(\Delta(t))\leq\beta,\\
        0, &\textit{otherwise},
    \end{cases}
\end{align}
\textit{$\gamma(\Delta(t))$ is defined in \eqref{index},}
\textit{the threshold $\beta$ is the unique root of equation \eqref{root}:}
\begin{align}\label{root}
\sum_{\delta=0}^{\tau(\beta)-1} r(\delta)-\beta[\tau(\beta)+1]=0,
\end{align}
\textit{and $\tau(\beta)$ is determined by equation \eqref{waiting}:
\begin{equation}\label{waiting}
\tau(\beta) = \min_{\delta}\{\delta: \gamma(\delta) \leq \beta, \delta=0,1,2,\ldots\}. 
\end{equation}
Moreover, $\beta$ is exactly the optimal objective value of \eqref{prob}, i.e., $ \beta=\bar{r}_\text{opt}$.}
\end{theorem}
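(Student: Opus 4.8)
The plan is to exploit the fact that, although \eqref{prob} is stated over all causal policies, the AoCSI dynamics in \eqref{aocsi} are deterministic given the actions and $r(\delta)$ is already an expected reward, so for any deterministic policy the objective is a deterministic functional of the chosen pilot epochs. Each pilot resets the AoCSI to its freshest value, so pilot slots are regeneration points and the state between two consecutive pilots is forced to run through $\delta=0,1,2,\dots$; the only freedom is the data-run lengths. I would therefore first reduce \eqref{prob} to optimization over renewal (periodic) policies indexed by a single run length $\tau$: a cycle is $\tau$ data slots collecting $S(\tau):=\sum_{\delta=0}^{\tau-1}r(\delta)$ followed by one zero-reward pilot slot, of length $\tau+1$. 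By the renewal--reward theorem the long-run average goodput of such a policy is $f(\tau):=S(\tau)/(\tau+1)$, so $\bar{r}_{opt}=\sup_{\tau\ge1}f(\tau)$. To turn this into an upper bound on \emph{every} policy, I would partition the time axis into inter-pilot intervals, use that a complete cycle of data-length $\ell$ contributes reward $S(\ell)\le\bar{r}_{opt}(\ell+1)$, sum over cycles, and let $T\to\infty$, with incomplete boundary intervals and the never-resetting case vanishing or dominated in the $\liminf$.

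Next I would solve the scalar problem $\sup_\tau f(\tau)$ by Dinkelbach-style fractional programming. Define $p(\beta):=\max_{\tau\ge1}\,[\,S(\tau)-\beta(\tau+1)\,]$. Each bracket is affine in $\beta$ with slope $-(\tau+1)\le-1$, so $p$ is convex, continuous, and \emph{strictly} decreasing, and $p(\beta)\ge0$ holds exactly when some $\tau$ satisfies $f(\tau)\ge\beta$, i.e.\ when $\beta\le\bar{r}_{opt}$. Hence $p$ has a unique root, which must equal $\bar{r}_{opt}$; writing $p(\bar{r}_{opt})=0$ at its maximizer $\tau(\beta)$ reproduces \eqref{root} and identifies $\beta=\bar{r}_{opt}$.

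It then remains to show that the greedy index rule \eqref{decision}--\eqref{waiting} implements an optimal run length. The key device is the partial-sum function $g_\beta(\tau):=S(\tau)-\beta\tau=\sum_{\delta=0}^{\tau-1}[r(\delta)-\beta]$ with $g_\beta(0)=0$. At $\beta=\bar{r}_{opt}$, the inequality $f(\tau)\le\beta$ is equivalent to $g_\beta(\tau)\le\beta$, with equality exactly at the optimal run lengths, which are therefore the global maximizers of $g_\beta$ (at which $g_\beta=\beta$). A direct computation from \eqref{index} gives $\gamma(\delta)\le\beta$ iff $\sum_{\delta'=\delta}^{\delta+m-1}[r(\delta')-\beta]\le0$ for all $m$, i.e.\ iff $g_\beta(\delta)\ge g_\beta(\delta')$ for every $\delta'\ge\delta$. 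Scanning from $\delta=0$, the first index that dominates its entire tail is precisely the first global maximizer of $g_\beta$, so $\tau(\beta)$ in \eqref{waiting} is an optimal run length and the induced policy is periodic with period $\tau(\beta)+1$. This is exactly where the non-monotonicity of $r$ is handled: a transient dip of $r(\delta)$ below $\beta$ must not trigger a reset if $r$ recovers later, and maximizing over the horizon in $\gamma$ detects whether $g_\beta$ will rise again.

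I expect the main obstacle to be this last step — showing that the horizon-maximizing index \eqref{index} correctly encodes the optimal stopping decision under non-monotonic $r$ — since a naive comparison of $r(\delta)$ against $\beta$ is \emph{not} optimal and one must rule out stopping at a local, non-global maximizer of $g_\beta$. A secondary technical point is the reduction in the first paragraph: bounding the $\liminf$ of an arbitrary (possibly non-periodic or randomized) policy by the renewal ratio requires care with boundary cycles and with policies that send pilots infinitely rarely, which I would control using boundedness of $r$, linearity of the objective over randomization, and attainment of $\sup_\tau f(\tau)$ at a finite horizon.
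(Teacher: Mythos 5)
Your proposal is correct, but it reaches Theorem~\ref{thm1} by a genuinely different route than the paper. The paper treats \eqref{prob} as an average-reward MDP, writes the Bellman optimality equation, reformulates it as an optimal waiting-time problem, and then imports two cited lemmas: one (from the sampling literature) to convert $\max_\tau\bigl[\sum_k r(\delta+k)-\tau\bar r_{opt}\bigr]\le 0$ into the ratio condition $\gamma(\delta)\le\bar r_{opt}$, and another to assert uniqueness of the root of \eqref{root}. You instead observe that the AoCSI dynamics \eqref{aocsi} are deterministic given the actions, so pilots are regeneration points and the problem collapses to a scalar fractional program $\sup_\tau S(\tau)/(\tau+1)$; you then get the threshold via Dinkelbach (your $p(\beta)$ is a max of affine functions with slopes $\le -2$, hence convex and strictly decreasing, which actually \emph{proves} the uniqueness of the root that the paper only asserts by citation), and you justify the index rule by showing $\gamma(\delta)\le\beta$ iff $g_\beta(\delta)$ dominates its entire tail, so \eqref{waiting} returns the first global maximizer of the partial-sum process --- this makes explicit why the horizon maximization in \eqref{index} is what prevents a premature reset at a local dip of the non-monotonic $r$, a point the paper buries inside ``similar to Lemma 2 of [sun2019sampling].'' What the paper's MDP route buys is a framework that survives genuinely stochastic state transitions (relevant to the multi-user extension it announces); what your route buys is self-containedness and sharper justification of uniqueness and of the index rule. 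The technical debts you flag --- bounding arbitrary causal policies by the best renewal ratio and attainment of $\sup_\tau f(\tau)$ at finite $\tau$ --- are real but are equally unaddressed by the paper (which does not verify existence of a solution to the average-cost Bellman equation). One small note: your $S(\tau)=\sum_{\delta=0}^{\tau-1}r(\delta)$ matches \eqref{root} as stated, but under \eqref{aocsi} the AoCSI after a pilot starts at $1$, so a cycle actually collects $r(1),\dots,r(\tau)$; this off-by-one discrepancy is already present in the paper between its equation (pr5) and \eqref{root}, so you have merely inherited it.
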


\begin{proof}
See Appendix \ref{appendix:A1}. 
\end{proof}

The optimal pilot scheduling policy, as defined in \eqref{decision}, implies that the transmitter decides to transmit a pilot when the index $\gamma(\Delta(t))$, as a function of AoCSI $\Delta(t)$, drops below a threshold $\beta$. This threshold is determined by solving the unique root of \eqref{root} using efficient algorithms such as \textit{Bisection Search} or \textit{Newton’s Method}, as outlined in Algorithms 1-3 of \cite{ornee2021sampling}. Note that because $r(\delta)$ is non-monotonic with respect to the AoCSI $\delta$, $\gamma(\delta)$ can also be non-monotonic in $\delta$. Moreover, \eqref{decision}-\eqref{waiting} have incorporated the fact that the goodput is zero when $c(t)=1$. 
Building upon Theorem \ref{thm1}, we derive the following corollary:


\begin{corollary}\label{coro1}
\textit{The optimal policy $\pi^*$ in Theorem \ref{thm1} is a periodic policy in which one pilot is sent every $(\tau(\beta) + 1)$ time slots.}
\end{corollary}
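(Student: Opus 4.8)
The plan is to show that, under the threshold rule \eqref{decision}, the AoCSI process degenerates into a deterministic renewal process, so that both the periodicity and its period can be read off directly from the scheduling rule with no further optimization required. First I would observe that the update \eqref{aocsi} is deterministic once the decisions $c^*(t)$ are fixed, and that each $c^*(t)$ is itself a deterministic function of $\Delta(t)$ through the index $\gamma(\cdot)$. Consequently, once the post-pilot state is fixed, the entire trajectory $\{\Delta(t)\}$ is determined, and it suffices to analyze a single cycle between two consecutive pilots.

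Next I would invoke the minimality built into the definition \eqref{waiting} of $\tau(\beta)$: it is the smallest age at which the index falls on or below the threshold, so $\gamma(\delta)>\beta$ for every $\delta<\tau(\beta)$ while $\gamma(\tau(\beta))\le\beta$. Substituting these inequalities into \eqref{decision} shows that the policy transmits data at each age $\delta\in\{0,1,\ldots,\tau(\beta)-1\}$ and transmits a pilot exactly when the age first reaches $\tau(\beta)$. Since data transmissions increment the age by one while the pilot resets it, the age advances deterministically through $0,1,\ldots,\tau(\beta)$ and then returns to its fresh value, after which the identical cycle repeats. A cycle therefore consists of $\tau(\beta)$ data slots followed by a single pilot slot, giving a period of $\tau(\beta)+1$; this is consistent with the renewal-reward reading of \eqref{root}, in which the per-cycle reward $\sum_{\delta=0}^{\tau(\beta)-1}r(\delta)$ divided by the cycle length $\tau(\beta)+1$ equals $\beta=\bar{r}_\text{opt}$.

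The step I expect to demand the most care is confirming that $\tau(\beta)$ is both finite and strictly positive, so that the cycle is well defined and non-degenerate. Finiteness requires that $\gamma(\delta)$ actually descends to or below $\beta$ at some finite age; since channel estimation quality deteriorates as the AoCSI grows, I would argue that $r(\delta)$, and hence $\gamma(\delta)$, eventually decays below $\bar{r}_\text{opt}=\beta$, ruling out a policy that never schedules a pilot. Strict positivity $\tau(\beta)\ge 1$ excludes the degenerate all-pilot solution, since setting $\tau(\beta)=0$ in \eqref{root} forces $\beta=0$, so any instance with positive optimal goodput automatically satisfies $\gamma(0)>\beta$. A final subtlety is that $\gamma$ is non-monotonic and may re-cross $\beta$ at ages beyond $\tau(\beta)$; this is harmless because the reset at age $\tau(\beta)$ ensures those larger ages are never visited. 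With these points established, the periodic structure and the period length $\tau(\beta)+1$ follow immediately from Theorem \ref{thm1}.
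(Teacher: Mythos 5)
Your proof is correct and matches the argument the paper intends: the paper states Corollary~\ref{coro1} without an explicit proof, treating it as an immediate consequence of the deterministic threshold structure and the renewal cycle already identified in the Bellman analysis of Theorem~\ref{thm1}, which is exactly what you spell out (deterministic age dynamics, minimality of $\tau(\beta)$, hence a cycle of $\tau(\beta)$ data slots plus one pilot slot). Your added care about the finiteness and strict positivity of $\tau(\beta)$ goes beyond what the paper records, and the only wrinkle is an indexing inconsistency inherited from the paper itself --- \eqref{aocsi} resets the age to $1$ while the theorem statement, \eqref{root}, and your cycle index ages from $0$ --- which does not affect the validity of your argument.
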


Leveraging the Bellman optimality equation of \eqref{prob}, Theorem \ref{thm1} and Corollary \ref{coro1} formally establish that the optimal pilot scheduling policy follows a periodic pattern in the single user case. The optimal period $(\tau(\beta) + 1)$ depends on the channel model, carrier frequency, mobility speed, SNR level, and etc. In our future work, we will extend this analysis to a multi-user setting, where the optimal pilot scheduling policy deviates from a simple periodic strategy.
\begin{figure}[!t]
    \centering
\includegraphics[width=1 \linewidth]{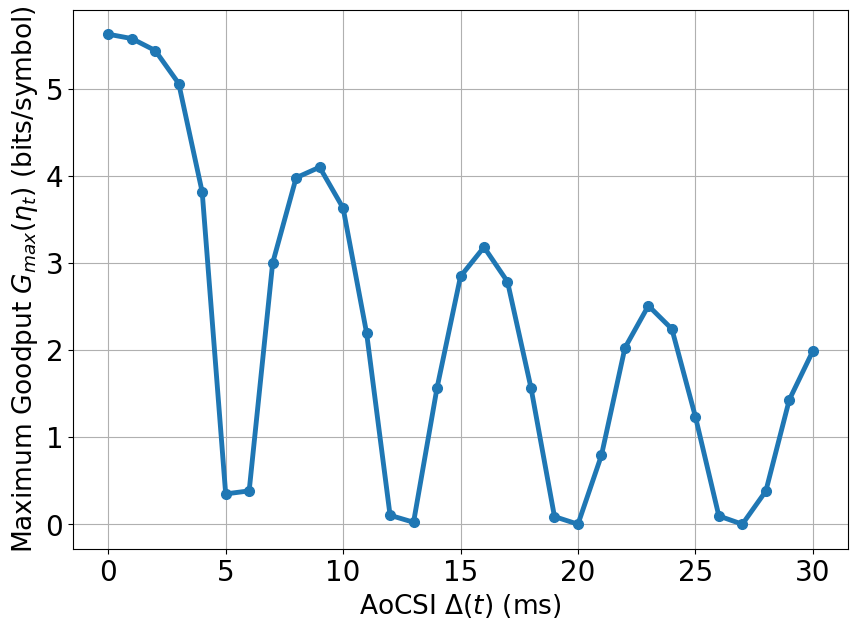}
    \caption{Maximum goodput $G_\text{max}(\eta_t)$ as a non-monotonic function of AoCSI $\Delta(t)$.}
    \label{diag_gp}
\end{figure}
\section{Evaluation}
In this section, we evaluate the performance of our threshold-based scheduling policy in optimizing pilot transmission decisions and the estimator’s performance in MCS selection under various conditions.

For our simulations, we model the wireless channel using a Rayleigh fading process based on Jakes’ model \cite{baddour2005autoregressive}, a widely adopted framework for characterizing time-correlated fading in wireless systems. In this model, the channel coefficients evolute following the autocorrelation function \cite{baddour2005autoregressive}
\begin{equation}\label{auto}
\rho_h(\delta) = J_0(2\pi f_d T_s \delta),
\end{equation}
where $f_d = \frac{v f_c}{c}$ denotes the Doppler frequency, $v$ represents the user velocity, $f_c$ is the carrier frequency, $c$ is the speed of light, $J_0(\cdot)$ is the zeroth-order Bessel function of the first kind, $T_s$ is the sampling period, and $\delta$ corresponds to the time interval influenced by variations in user velocity or CSI update delay. In our simulations, we set the sampling period to $T_s = 1$ ms and the carrier frequency to $f_c = 2.4$ GHz.

We consider 15 distinct MCS levels, corresponding to Channel Quality Indicator (CQI) values from 1 to 15 in the LTE standard \cite{3gpp_ts_36_213}. The modulation schemes and code rates adhere to the LTE standard. The BLER values are drawn from the precomputed BLER-SINR curves for Additive White Gaussian Noise (AWGN) channels \cite{piro2010simulating}, obtained using an LTE link-level simulator.



\subsection{Numerical Results for Estimator}



We evaluate the relationship between the AoCSI $\Delta(t)$ and the maximum achievable goodput $\gamma(\Delta(t))$, showing the non-monotonic behavior of goodput as AoCSI increases. For this analysis, we consider a user mobility of $v=15$ mph and $20$ dB SNR level. We see from Figure \ref{diag_gp} that initially goodput is high when the CSI is fresh, but as AoCSI increases, goodput rapidly declines due to the growing in channel state estimation. Instead of a continuously decreasing trend, the plot exhibits a fluctuating pattern, where goodput temporarily recovers before declining again. This behavior arises from the fluctuating pattern of the zeroth-order Bessel function $J_0(\cdot)$. 

\begin{figure}[!t]
    \centering
\includegraphics[width=1 \linewidth]{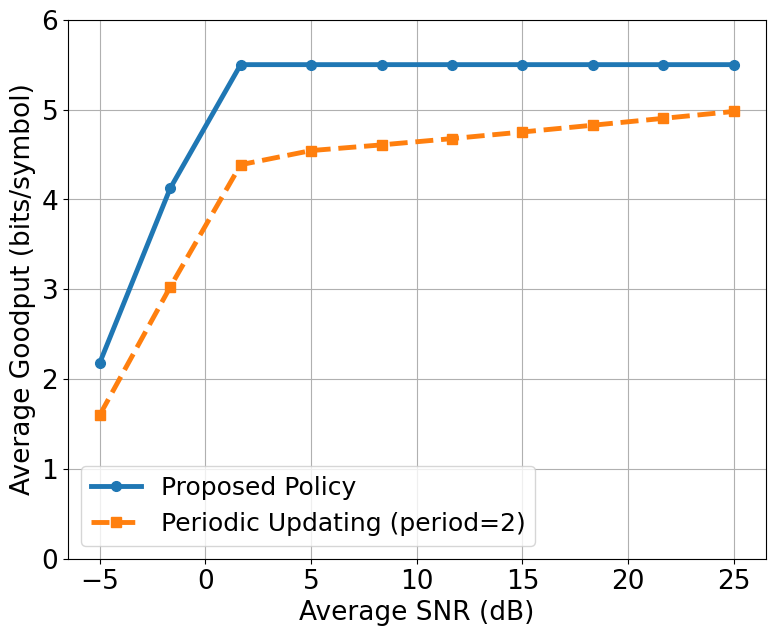}
    \caption{Performance comparison between proposed policy and periodic updating policy with update period as 2 time slots, with varying SNR values between $-5$ and $25$ dB.}
    \label{pol_snr}
\end{figure}

\subsection{Numerical Results for Scheduler}

We evaluate the performance of our proposed scheduling
policy in maximizing the overall goodput. We compare the following two scheduling policies for transmission decisions:
\begin{itemize}
    \item Periodic Updating: In this policy the transmitter decides to transmit pilot after waiting for every $2$ time slots.
    \item Optimal Threshold-Based Policy: The policy we propose in Theorem 1.
\end{itemize}
Figure \ref{pol_snr} shows the comparison in average goodput performance of the proposed policy with the periodic updating policy as a function of average SNR. The SNR is varied from -5 dB to 25 dB, with a fixed user mobility of $v=15$ mph. As expected, both policies exhibit an increasing trend in goodput with increasing SNR and eventually stabilize at high SNR values. However, the proposed policy consistently outperforms the periodic updating policy across all SNR levels and reaches the highest achievable average goodput faster.
\begin{figure}[!t]
    \centering
\includegraphics[width=1 \linewidth]{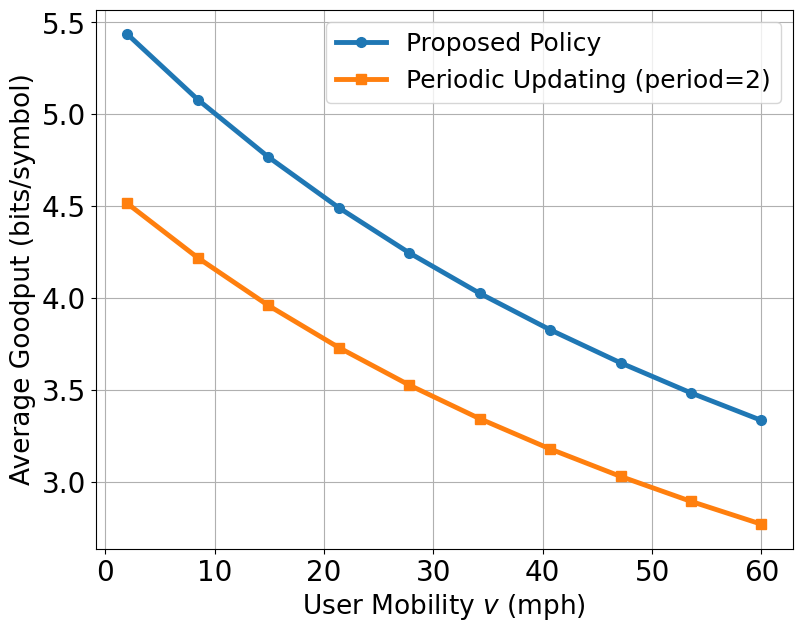}
    \caption{Performance comparison between the proposed policy and the periodic updating policy with update period as 2 time slots and varying user speed $v$ from $20-70$ mph.}
    \label{pol_m}
\end{figure}

Next, we compare the average goodput performance of the two policies defined earlier with increasing user mobility, as shown in Figure \ref{pol_m}. The user mobility $v$ is varied between 2 mph and 60 mph, covering a range of real-world scenarios from pedestrian speeds to vehicular motion. As mobility increases, both policies show a decline in goodput due to faster channel variations degrading CSI accuracy. However, the proposed policy consistently outperforms periodic updating.


\section{Conclusion}
This paper investigates the role of AoCSI in optimizing wireless communication systems, focusing on its impact on MCS selection. We analyze the relationship between CSI timeliness, channel estimation, MCS selection and system performance, proposing an optimal pilot scheduling policy to maximize long-term goodput. Numerical results demonstrated that the proposed threshold policy effectively addresses the challenges posed by outdated CSI, ensuring reliable and efficient data transmission. Our future work will extend the framework to multi-user and multi-antenna systems, where additional system complexities may lead to non-periodic pilot scheduling. 

\section*{Acknowledgment}
The authors thank Sam Chamoun for his assistance with the simulations. 
The second author is also grateful to Md Kamran Chowdhury Shisher for earlier discussions on this topic.

\printbibliography

@article{baddour2005autoregressive,
  title={Autoregressive modeling for fading channel simulation},
  author={Baddour, Kareem E and Beaulieu, Norman C},
  journal={IEEE Transactions on Wireless Communications},
  volume={4},
  number={4},
  pages={1650--1662},
  year={2005},
  publisher={IEEE}
}

@article{truong2013effects,
  title={Effects of channel aging in massive {MIMO} systems},
  author={Truong, Kien T and Heath, Robert W},
  journal={Journal of Communications and Networks},
  volume={15},
  number={4},
  pages={338--351},
  year={2013},
  publisher={KICS}
}

@inproceedings{costa2015age,
  title={On the age of channel state information for non-reciprocal wireless links},
  author={Costa, Maice and Valentin, Stefan and Ephremides, Anthony},
  booktitle={IEEE International Symposium on Information Theory (ISIT)},
  pages={2356--2360},
  year={2015}
}

@article{shisher2024timely,
  title={Timely communications for remote inference},
  author={Shisher, Md Kamran Chowdhury and Sun, Yin and Hou, I-Hong},
  journal={IEEE/ACM Transactions on Networking (ToN)},
  volume={32},
  number={5},
  pages={3824--3839},
  year={2024},
  publisher={IEEE}
}

@inproceedings{lipski2024age,
  title={Age of Channel State Information for Collaborative Beamforming},
  author={Lipski, Michael and Kam, Clement and Kompella, Sastry and Ephremides, Tony},
  booktitle={ACM MobiHoc Age and Semantics of Information Workshop },
  pages={392--397},
  year={2024}
}

@inproceedings{yates2015lazy,
  title={Lazy is timely: Status updates by an energy harvesting source},
  author={Yates, Roy D},
  booktitle={IEEE international symposium on information theory (ISIT)},
  pages={3008--3012},
  year={2015}
}

@inproceedings{kaul_2012_real,
  title={Real-time status: How often should one update?},
  author={Kaul, Sanjit and Yates, Roy and Gruteser, Marco},
  booktitle={Proceedings IEEE INFOCOM},
  pages={2731--2735},
  year={2012}
}

@inproceedings{shisher2021age,
  title={The age of correlated features in supervised learning based forecasting},
  author={Shisher, Md Kamran Chowdhury and Qin, Heyang and Yang, Lei and Yan, Feng and Sun, Yin},
  booktitle={IEEE Conference on Computer Communications Workshops (INFOCOM WKSHPS)},
  pages={1--8},
  year={2021}
}

@inproceedings{shisher2022does,
  title={How does data freshness affect real-time supervised learning?},
  author={Shisher, Md Kamran Chowdhury and Sun, Yin},
  booktitle={ACM MobiHoc},
  pages={31--40},
  year={2022}
}

@article{sun2017update,
  title={Update or wait: How to keep your data fresh},
  author={Sun, Yin and Uysal-Biyikoglu, Elif and Yates, Roy D and Koksal, C Emre and Shroff, Ness B},
  journal={IEEE Transactions on Information Theory},
  volume={63},
  number={11},
  pages={7492--7508},
  year={2017},
  publisher={IEEE}
}

@article{shisher2023learning,
  title={Learning and communications co-design for remote inference systems: Feature length selection and transmission scheduling},
  author={Shisher, Md Kamran Chowdhury and Ji, Bo and Hou, I-Hong and Sun, Yin},
  journal={IEEE Journal on Selected Areas in Information Theory},
  volume={4},
  pages={524--538},
  year={2023},
  publisher={IEEE}
}

@article{ornee2021sampling,
  title={Sampling and remote estimation for the {O}rnstein-{U}hlenbeck process through queues: Age of information and beyond},
  author={Ornee, Tasmeen Zaman and Sun, Yin},
  journal={IEEE/ACM Transactions on Networking (ToN)},
  volume={29},
  number={5},
  pages={1962--1975},
  year={2021}
}

@article{sun2019sampling,
  title={Sampling of the {W}iener process for remote estimation over a channel with random delay},
  author={Sun, Yin and Polyanskiy, Yury and Uysal, Elif},
  journal={IEEE Transactions on Information Theory},
  volume={66},
  number={2},
  pages={1118--1135},
  year={2020},
  publisher={IEEE}
}

@inproceedings{klugel2019aoi,
  title={Ao{I}-penalty minimization for networked control systems with packet loss},
  author={Kl{\"u}gel, Markus and Mamduhi, Mohammad H and Hirche, Sandra and Kellerer, Wolfgang},
  booktitle={IEEE INFOCOM 2019-IEEE Conference on Computer Communications Workshops (INFOCOM WKSHPS)},
  pages={189--196},
  year={2019},
  organization={IEEE}
}

@inproceedings{soleymani2019stochastic,
  title={Stochastic control with stale information--part i: Fully observable systems},
  author={Soleymani, Touraj and Baras, John S and Johansson, Karl H},
  booktitle={2019 IEEE 58th Conference on Decision and Control (CDC)},
  pages={4178--4182},
  year={2019},
  organization={IEEE}
}

@inproceedings{kadota2018optimizing,
  title={Optimizing age of information in wireless networks with throughput constraints},
  author={Kadota, Igor and Sinha, Abhishek and Modiano, Eytan},
  booktitle={IEEE INFOCOM 2018-IEEE Conference on Computer Communications},
  pages={1844--1852},
  year={2018},
  organization={IEEE}
}

@article{kadota2018scheduling,
  title={Scheduling policies for minimizing age of information in broadcast wireless networks},
  author={Kadota, Igor and Sinha, Abhishek and Uysal-Biyikoglu, Elif and Singh, Rahul and Modiano, Eytan},
  journal={IEEE/ACM Transactions on Networking},
  volume={26},
  number={6},
  pages={2637--2650},
  year={2018},
  publisher={IEEE}
}

@article{bedewy2021optimal,
  title={Optimal sampling and scheduling for timely status updates in multi-source networks},
  author={Bedewy, Ahmed M and Sun, Yin and Kompella, Sastry and Shroff, Ness B},
  journal={IEEE Transactions on Information Theory},
  volume={67},
  number={6},
  pages={4019--4034},
  year={2021},
  publisher={IEEE}
}

@inproceedings{chakraborty2025timely,
  title={Timely Remote Estimation with Memory at the Receiver},
  author={Chakraborty, Sirin and Sun, Yin},
  booktitle={Asilomar Conference on Signals, Systems and Computers},
  year={2024}
}

@inproceedings{ari2024goal,
  title={Goal-oriented communications for remote inference under two-way delay with memory},
  author={Ari, Cagri and Shisher, Md Kamran Chowdhury and Uysal, Elif and Sun, Yin},
  booktitle={2024 IEEE International Symposium on Information Theory (ISIT)},
  pages={1179--1184},
  year={2024}
}

@inproceedings{costa2015csit,
  title={On the age of channel information for a finite-state {M}arkov model},
  author={Costa, Maice and Valentin, Stefan and Ephremides, Anthony},
  booktitle={IEEE International Conference on Communications (ICC)},
  pages={4101--4106},
  year={2015}
}

@article{klein2017staleness,
  title={Staleness bounds and efficient protocols for dissemination of global channel state information},
  author={Klein, Andrew G and Farazi, Shahab and He, Wenmin and Brown, D Richard},
  journal={IEEE Transactions on Wireless Communications},
  volume={16},
  number={9},
  pages={5732--5746},
  year={2017}
}

@article{piro2010simulating,
  title={Simulating {LTE} cellular systems: An open-source framework},
  author={Piro, Giuseppe and Grieco, Luigi Alfredo and Boggia, Gennaro and Capozzi, Francesco and Camarda, Pietro},
  journal={IEEE Transactions on Vehicular Technology},
  volume={60},
  number={2},
  pages={498--513},
  year={2010}
}

@techreport{3gpp_ts_36_213,
  author = {{3GPP}},
  title = {Technical Specification Group Radio Access Network; Physical layer procedures (Rel. 9)},
  type = {Technical Specification},
  number = {3GPP TS 36.213}
}


\begin{appendices}
\section{Proof of Theorem 1}\label{appendix:A1}
The proof follows a similar idea to \cite[Theorem 4]{shisher2024timely}, but the steps differ due to the distinct problem formulation \eqref{prob} for pilot scheduling. 

Problem \eqref{prob} is an infinite-horizon time-averaged MDP problem, with the following components:
\begin{itemize}
    \item State: At time slot $t$, the state of the MDP is defined by the AoCSI $\Delta(t)$.
    \item Action: The action of the scheduler is either transmit a pilot or not, represented by the indicator function $c(t)\in\{0,1\}$.
    \item Reward: The reward at time $t$ is given by  $(1-c(t))r(\Delta(t))$.
    \item State Transition: The AoCSI evolves following \eqref{aocsi}.
\end{itemize}
The Bellman optimality equation of \eqref{prob} is given by
\begin{align}\label{bell1}
    &h(\Delta(t))=\\\nonumber
    &\max_{c \in \{0,1\}} \Big\{[(1-c)r(\Delta(t))-\bar{r}_\text{opt}]+\mathbb{E}[h(\Delta(t+1))]\Big\},
\end{align}
for all time $t=0, 1,2,\ldots$, where $h(\delta)$ is the relative value function of \eqref{prob}.

Given $\Delta(t)=\delta$ and suppose that the system chooses to wait (i.e. transmits data) for $\tau$ time slots before sending a pilot in the $(\tau+1)$-th slot, the Bellman optimality equation \eqref{bell1} is equivalently expressed as
\begin{align}\label{bell2}\nonumber
    &h(\delta)=\\
    &\max_{\tau\in\{0,1,2,\ldots\}}\Big[\sum_{k=0}^{\tau-1}r(\delta+k)-(\tau+1)\bar{r}_{opt}+h(1)\Big], \delta=1,2,\ldots
\end{align}

Let $\tau^*(\delta)$ denote the optimal waiting time that maximizes the right hand side of \eqref{bell2}. The optimal decision is to transmit a pilot immediately at state $\delta$, i.e. $\tau^*(\delta)=0$, if 
\begin{align}\label{pr6}\nonumber
    \max_{\tau\in\{1,2,\ldots\}}\bigg[\sum_{k=0}^{\tau-1}r(\delta+k)-(\tau+1)\bar{r}_{opt}\bigg]\\\leq
    \sum_{k=0}^{-1}r(\delta+k)-(0+1)\bar{r}_{opt},
\end{align}
Simplifying \eqref{pr6}, we get
\begin{align}\label{compare}
    \max_{\tau\in\{1,2,\ldots\}}\bigg[\sum_{k=0}^{\tau-1}r(\delta+k)-(\tau+1)\bar{r}_{opt}\bigg]\leq -\bar{r}_\text{opt}.
\end{align}
Rearranging \eqref{compare}, we write
\begin{equation}\label{pr1}
    \max_{\tau\in\{1,2,\ldots\}}~~\bigg[\sum_{k=0}^{\tau-1}r(\delta+k)- \tau\bar{r}_{opt}\bigg]\leq 0.
\end{equation}
Similar to Lemma 2 in \cite{sun2019sampling}, inequality \eqref{pr1} holds, if and only if
\begin{equation}\label{pr2}
    \max_{\tau\in\{1,2,\ldots\}}\frac{1}{\tau}\bigg[\sum_{k=0}^{\tau-1}r(\delta+k)\bigg]\leq \bar{r}_{opt}.
\end{equation}
Using \eqref{index}, we can rewrite \eqref{pr2} as
\begin{equation}
    \gamma(\delta)\leq \bar{r}_{opt}.
\end{equation}
In other words, $\tau^*(\delta)=0$, if $\gamma(\delta)\leq \bar{r}_{opt}$.
Hence, the optimal waiting time $\tau^*(\delta)$ that solves \eqref{bell2} is determined by
\begin{equation}
    \tau^*(\delta)=\min_k\{k\geq 0: \gamma(\delta+k)\leq \bar{r}_\text{opt}\}.
\end{equation}


Next, to determine the optimal objective value $\bar{r}_{opt}$, we set $\delta=1$ in \eqref{bell2}, resulting in
\begin{equation}\label{pr3}
h(1)=\sum_{k=0}^{\tau-1}r(k+1)-(\tau+1)\bar{r}_{opt}+h(1).
\end{equation}
From \eqref{pr3}, we get
\begin{equation}\label{pr4}
    \sum_{k=0}^{\tau-1}r(k+1)-(\tau_{}+1)\bar{r}_{opt}=0.
\end{equation}
Rearranging \eqref{pr4}, we get 
\begin{equation}\label{pr5}
    \bar{r}_{opt}=\frac{\sum_{k=0}^{\tau-1}r(k+1)}{\tau_{}+1}.
\end{equation}

Let $\tau(\beta)$ be the waiting time under a threshold $\beta$, we rewrite \eqref{pr5} as
\begin{equation}
    \bar{r}_{opt}=\frac{\sum_{t=0}^{\tau(\beta)-1}r(\Delta(t))}{\tau(\beta)+1},
\end{equation}

Finally, in order to find the threshold $\beta$, using the proof techniques for Lemma 2 in \cite{ornee2021sampling}, we say the following equation
\begin{align}
\sum_{t=0}^{\tau(\beta)-1} r(\Delta(t))-\beta[\tau(\beta)+1]=0
\end{align}
has a unique root $\beta$.
\end{appendices}

\end{document}